\newtheorem{thm}{Theorem}[section]
\newtheorem{obs}[thm]{Observation}
\newtheorem{cl}[thm]{Claim}
\theoremstyle{remark}
\newcommand{\set}[1]{\left\{#1\right\}}
\newcommand{\bb}[0]{\mathbb}
\newcommand{\R}{\bb{R}}
\newcommand{\w}{w}
\begin{document}

\title{The Popular Dimension of Matchings}

\author[1]{Frank Connor}
\author[1]{Louis-Roy Langevin}
\author[1]{Ndiam\'e Ndiaye}
\author[1]{Agn\`es Totschnig}
\author[2]{Rohit Vasishta}
\author[12]{Adrian Vetta}

\affil[1]{Department of Mathematics and Statistics, McGill University}
\affil[2]{School of Computer Science, McGill University}
\affil[ ]{\texttt {\{frank.connor,louis-roy.langevin,ndiame.ndiaye,agnes.totschnig\}@mail.mcgill.ca}}
\affil[ ]{\texttt {\{rohit.vasishta,adrian.vetta\}@mcgill.ca}}

\date{}

\maketitle

\begin{abstract}
We study popular matchings in three classical settings: the house allocation problem, the marriage problem, and the roommates problem. In the popular matching problem, (a subset of) the vertices in a graph have preference orderings over their potential
matches. A matching is popular if it gets a plurality of votes
in a pairwise election against any other matching.
Unfortunately, popular matchings typically do not exist.
So we study a natural relaxation, namely popular winning sets which are a set of matchings that collectively get a plurality of votes in a pairwise election against any other matching. 
The {\em popular dimension} is the minimum cardinality of a popular winning set, in the worst case over the problem class. 

We prove that the popular dimension is exactly $2$ in the
house allocation problem, even if the voters are weighted and
ties are allowed in their preference lists.
For the marriage problem and the roommates problem, we prove that the popular dimension is between $2$ and $3$,
when the agents are weighted and/or their preferences 
orderings allow ties. In the special case where 
the agents are unweighted and have strict preference orderings, the popular dimension of the marriage problem is known to be exactly $1$ and we prove the popular dimension of the roommates problem is exactly $2$.
\end{abstract}

\section{Introduction}\label{sec:introduction}
A classical problem in matching markets is {\em matching under ordinal preferences}, where each agent has a preference ordering either over a set of items or over the other agents. Three settings are of particular importance:
\begin{enumerate}
\item {\em The House Allocation Problem.} There is a set of house buyers/renters (agents) with preference rankings over a set of houses (items). 
\item {\em The Marriage Problem.} There is a set of men and a set of women where each person (agent) ranks members of the opposite sex.  
\item {\em The Roommates Problem.}
There is a set of students (agents) with preference rankings over each other as potential roommates in a university dorm with two beds per room. 
\end{enumerate}
These problems naturally correspond to finding a matching in
a bipartite graph (the house allocation and marriage problems) or in a non-bipartite graph (the roommate problem). They are of interest as they model a plethora of other applications, such as 
matching employees to tasks, jobs to machines, 
job applicants to employers, 
medical residents to hospitals, or work-partners in a company~\cite{EIV23,Hae18}.

The central line of research concerns finding an equilibrium (or stable solution) in these markets, if one exists. Of course, different concepts of stability are used in the literature. 
Notably, for {\em Pareto stability} an equilibrium requires only the weak requirement that the matching is not Pareto dominated by another matching\footnote{A solution is
{\em Pareto optimal (stable)} if there is no other solution in which every agent is at least as well off and at least one agent is strictly better off.}. In contrast, for {\em core stability} an equilibrium requires the much stronger requirement that the matching is not blocked by a pair of agents\footnote{A solution is in the {\em core} if no coalition of agents can benefit by unilaterally deviating  from the solution. In the case of matching problems the relevant coalitions of interest have cardinality two.}.

In this paper, we study an intermediate stability concept called
{\em popularity} which can be motivated through the lens of ``electing'' a winning matching. Pareto stability states that a proposed election winning matching can be vetoed only if all the agents are unanimous in preferring an alternate matching. On the other hand,
core stability states that a proposed election winner may be vetoed because of just two dissenting agents.
Between these two extremes, popular stability states that a proposed election winner can be vetoed only if all the agents prefer an alternate matching in a plurality vote.
Thus, a {\em popular matching} is a matching that gets a plurality of votes against any other matching in a pairwise election.

Popular matchings were introduced by G\"ardenfors~\cite{Gar75} in 1975 in the context of the marriage problem. They have been studied intensively in computer science and social choice since Abraham et al.~\cite{AIK07} investigated their application in the house allocation problem. Before discussing our results and background literature, let us formally define the popular matching model for our three aforementioned problems.

\subsection{The Popular Matching Model}
In the {\em house allocation problem}, we are given a bipartite graph
$G=(I\cup J,E)$ where $I$ is the set of agents and $J$ is the set of houses (items). Each agent $i$ has a preference order $\succeq_i$
over the set of houses $\Gamma(i)\subseteq J$ it is adjacent to in the graph. In the {\em marriage problem}, we are given a bipartite graph
$G=(M\cup W,E)$ where the set of agents is $M\cup W$, where $M$ and $W$ are respectively a collection of men and women. Each man $m$ has a preference order $\succeq_m$
over the women $\Gamma(m)\subseteq W$ he is adjacent to, and each woman $w$ has a preference order $\succeq_w$ over the men $\Gamma(w)\subseteq M$ she is adjacent to. Finally, in the {\em roommates problem}, we have a non-bipartite graph
$G=(V,E)$ where $V$ is the set of agents (students) and agent $i$ has a preference order $\succeq_i$ over its neighbouring agents $\Gamma(i)\subseteq V$ in $G$. 

We study four cases. First, we distinguish between strict preference lists and weak preference lists where ties are allowed.
Second, we assume each agent $i$ is given a
non-negative weight $w(i)$. In particular, we distinguish between
the unweighted case where $w(i)=1, \forall i$ and the weighted case where $w(i) \ge 0$ is free.

Given two matchings $M$ and $M'$ in the graph, we say that agent $i$ strictly prefers $M$ over $M'$ if (i) it strictly prefers its match in $M$ over its match in $M'$, or (ii) it is matched in $M$ but unmatched in $M'$. We denote this strict preference by $M \succ_i M'$. We then define $\phi(M, M') = \sum_{i: M \succ_i M'} w(i)$ to be the total weight of agents that strictly prefer $M$ over $M'$.
If $\phi(M, M')\ge \phi(M', M)$ then we say $M$ is at least as popular as $M'$.\footnote{We remark that the relation is not transitive.} A matching $M$ is {\em popular} if it is at least as popular as any other matching.

For the marriage problem, a popular matching always exists in
the special case of unweighted agents with strict preference
lists. This is because in this case a stable matching exists~\cite{GS62}
and, furthermore, G\"ardenfors~\cite{Gar75} proved a stable matching is popular. In fact, Huang and Kavitha~\cite{HK13} proved that a stable matching is a popular matching of minimum size.

For the roommate problem, Chung~\cite{Chu00} showed stable matchings are again popular. Unfortunately, stable matchings need not exist in the stable roommates problem~\cite{GS62}; however, there is a
polynomial time algorithm to find a stable matching if one exists~\cite{Irv85}.

In the house allocation problem it is easy to build examples
where a popular matching does not exist~\cite{AIK07}. Indeed non-existence is the norm. Specifically, Mahdian~\cite{Mah06} proved that when the 
agents have uniform independent random preferences there is no popular
matching with high probability, unless the number of houses
is significantly greater than the number of agents; see also~\cite{RI19}.
 


Consequently, an important line of research has been to characterize when a matching is popular.
Abraham et al.~\cite{AIK07} gave a graphical characterization for popular matchings in the house allocation problem.
For the marriage problem, a graphical characterization was given by~\cite{HK13}. These characterizations lead to polynomial time
algorithms to test for the existence of a popular matching and outputs a maximum cardinality popular matching if it exists. 
An optimization-based characterization in terms of maximum weight matchings was given by Bir\'o et al.~\cite{BIM10}. 

The study of agents with weighted voting powers was instigated by Mestre~\cite{Mes14}
for the house allocation problem and by Heeger and Cseh~\cite{HC24} for the marriage problem. See also~\cite{IW10, Kam20,SM10}.

The general non-existence of popular matching led McCuthchen~\cite{McC08} to study multiplicative and additive relaxations called the unpopularity factor and unpopularity margin, respectively. He showed that finding matchings to minimize these
criteria is hard.
An alternate approach around the nonexistence problem
is to consider fractional matchings.
It follows from the seminal works of Kreweras~\cite{Kre65} and Fishburn~\cite{Fis84} 
on maximal lotteries that there exist probability distributions 
over matchings that are at least as popular as any other matching.
Thus, so-called {\em mixed popular matchings}~\cite{HK17,Kav16,KMN11} are guaranteed to exist.
It is the generic nonexistence of popular matchings that also motivates our work. However, we circumvent non-existence via an alternate approach called the {\em popular dimension}, described below.

We remark that numerous extensions to the popular matching problem have been studied, incorporating vertex capacities, edge valuations,  matroid constraints, etc. 
See for example~\cite{Csa24, CKT24, Kam16, Kam17, Kam20, KKM22a, MS06, Pal14, SM10}. 
For more details
, we refer the reader to the survey~\cite{Cseh17} by Cseh.

 

\subsection{The Popular Dimension and Our Results}
Recall the electoral motivation behind popular matchings.
We can exploit this correspondence to
overcome the nonexistence problem for popular matchings.
Specifically, a weak {\em Condorcet winner} is a candidate that is weakly preferred by at least half the voters in a pairwise election against any other candidate.
Thus popular matchings are a stronger notion than Condorcet winners in our election where agents vote for matchings. 
However, a Condorcet winner typically does not exist in an election.
Hence, Elkind et al.~\cite{ELS15} proposed the concept of a {\em Condorcet winning set}, which is a ``committee'' $\mathcal{C}$ of candidates such that no other candidate is strictly preferred by more than half the voters over every candidate in $\mathcal{C}$.
The {\em Condorcet dimension} of an election is the minimum cardinality of a Condorcet winning set. The Condorcet dimension is finite (as the set of all candidates forms a Condorcet winning set) and is one if and only if a Condorcet winner exists.

While the concept of being undefeated is subtly different in the Condorcet winner problem and the popular matching problem (we discuss this in more detail in Section~\ref{sec:disc}), we can follow this approach of committees for popular matchings.
We say that a collection of matchings $\mathcal{M}=\{M_1,\dots,M_k\}$ is a {\em popular winning set} if a plurality of voters prefer $\mathcal{M}$ over any matching $M'$. 
Mathematically, we write $\mathcal{M} \succ_i M'$
if agent $i$ strictly prefers its best match in $\mathcal{M}$ over its match in $M'$. Similarly, 
we say $M' \succ_i \mathcal{M}$
if agent $i$ strictly prefers its match in $M'$ over its best match in $\mathcal{M}$.
 We then define $\phi(\mathcal{M}, M') = \sum_{i: \mathcal{M} \succ_i M'} w(i)$ to be the total weight of agents that strictly prefer $\mathcal{M}$ over $M'$. Similarly, we write 
 $\phi(M', \mathcal{M}) = \sum_{i: M' \succ_i \mathcal{M}} w(i)$.
If $\phi(\mathcal{M}, M')\ge \phi(M', \mathcal{M})$ then we say that the set of matchings $\mathcal{M}$ is at least as popular as the matching $M'$. A set of matchings $\mathcal{M}$ is a popular winning set if it is at least as popular as any other matching.
The {\em popular dimension of an instance} is the minimum cardinality $k$ of a popular winning set. The {\em popular dimension} of a class of instances is the maximum over all instances in that class. 

The focus of this paper is to evaluate the popular dimension
of the house allocation, marriage, and roommates problems. 
Our results are summarized in Table~\ref{tab:pop-dim}. Each cell contains the possible values for the dimension of the associated class of instances. In particular, we prove in Section~\ref{sec:housing} that the popular dimension for the house allocation problem is exactly $2$; this result applies for all four cases. 
In Sections~\ref{sec:marriage} and~\ref{sec:roommates}, we study the marriage problem and the roommates problem respectively. In the case of the marriage problem with unweighted agents having strict preferences, the popular dimension is exactly one, as it has been long known that popular matchings always exist. For the roommates problem with unweighted agents having strict preferences, we prove that the popular dimension is exactly $2$. For the three more general cases in both the marriage and roommates problems, we show that the popular dimension is at least $2$ and at most $3$.
\begin{table}[h!]
  \centering
  \begin{threeparttable}
    \caption{Bounds on the Popular Dimension of Matchings}
    \label{tab:pop-dim}
    \begin{tabularx}{\textwidth}{@{}lXXX@{}}
      \toprule
      \textbf{Case} &
      \textbf{House Allocation} &
      \textbf{Marriage} &
      \textbf{Roommates} \\
      \midrule
      Unweighted + Strict      & \textcolor{ForestGreen}{$2 $} (Thm.~\ref{thm:house-general-2}) & 1 \cite{GS62,Gar75} & \textcolor{ForestGreen}{$2 $} (Thm.~\ref{thm:roommates-basic-2}) \\[2pt]
      Weighted + Strict        & \textcolor{ForestGreen}{$2 $} (Thm.~\ref{thm:house-general-2}) & \textcolor{orange}{$2, 3 $} (Thm.~\ref{thm:marriage-general-3}) & \textcolor{orange}{$2, 3 $} (Thm.~\ref{thm:roommates-general-3}) \\[2pt]
      Unweighted with Ties     & \textcolor{ForestGreen}{$2 $} (Thm.~\ref{thm:house-general-2}) & \textcolor{orange}{$2, 3 $} (Thm.~\ref{thm:marriage-general-3}) & \textcolor{orange}{$2, 3 $} (Thm.~\ref{thm:roommates-general-3}) \\[2pt]
      Weighted with Ties       & \textcolor{ForestGreen}{$2 $} (Thm.~\ref{thm:house-general-2}) & \textcolor{orange}{$2, 3 $} (Thm.~\ref{thm:marriage-general-3})   & \textcolor{orange}{$2, 3 $} (Thm.~\ref{thm:roommates-general-3}) \\
      \bottomrule
    \end{tabularx}
    \begin{tablenotes}[para,flushleft]
      \footnotesize
      \textcolor{ForestGreen}{$\textbf{Green}$} = The exact dimension is now known \hspace{8mm}
      \textcolor{orange}{$\textbf{Orange} $} = The exact dimension remains open     
    \end{tablenotes}
  \end{threeparttable}
\end{table}

\subsection{Discussion}\label{sec:disc}
Huang and Kavitha~\cite{HK17} proved that, with unweighted agents and strict preferences, the polytope of popular fractional matchings (that is, mixed popular matchings) is half-integral in the roommates problem, and thus also in the marriage problem. In the marriage problem this half-integrality yields a mixed matching supported on two integral matchings, hence a popular dimension of at most $2$. Although, as discussed with respect to Table~\ref{tab:pop-dim}, in this setting the popular dimension was already known to be exactly $1$. 
For the roommates problem while half-integrality holds, the resultant fractional matchings need not decompose into two integral matchings, but it can be shown that they decompose into supports of cardinality three. This shows the popular dimension is at most $3$; we improve this by proving the popular dimension is exactly $2$ in this case. In addition, Huang and Kavitha~\cite{HK17} show that half-integrality fails once ties are allowed in the preference lists for the  
marriage and roommates problem.
This, in turn, implies that half-integrality does not hold for the house allocation problem even with unweighted agents and strict preferences.








It is worth clarifying the relationship between 
the popular dimension and the Condorcet dimension.
The popular dimension is lower bounded by the Condorcet dimension. To see this, we remark that there is a subtle but important distinction
between popular winning sets and Condorcet winning sets.
Voters who are indifferent are assumed to vote for the proposed Condorcet winning sets.
That is, for $\mathcal{M}$ to be defeated by an alternate
matching $M'$ it must be the case that more than half the voters (a majority) strictly prefer $M'$ over $\mathcal{M}$. In particular, even if a plurality of voters strictly prefer $M'$ over $\mathcal{M}$, the committee $\mathcal{M}$ still wins if that plurality does not form a majority.

In contrast, for popular winning set, the votes of voters who are indifferent between the proposed popular winning set $\mathcal{M}$ and the alternative $M'$ are not counted (or, equivalently, half these votes are assigned to both sides). Thus $\mathcal{M}$ is defeated by $M'$ if and only if a strict plurality of the voters prefer $M'$ over $\mathcal{M}$.
It immediately follows that a popular winning set is 
a Condorcet winning set, but the converse is not always the case. 

Hence, the popular dimension may be higher than the Condorcet dimension. Elkind et al.~\cite{ELS15} proved that the Condorcet dimension is at most $\lceil \log m \rceil$ in any election, where $m$ is the number of candidates. Lassota et al.~\cite{LVV24} conjectured that the Condorcet dimension is
at most $3$ for any election, and proved the conjecture
holds in special cases of the spatial electoral model in two dimensions. Recently, Charikar et al.~\cite{CLR25} proved the
Condorcet dimension is at most $6$ in any election.

The Condorcet winning set given by the algorithm of~\cite{ELS15} does in fact satisfy the stronger popularity condition. But since the number of matchings (candidates) is exponential, this only implies an upper bound of $O(n\log n)$ on the popular dimension.
The Condorcet winning set of cardinality output in~\cite{CLR25}
does not necessarily satisfy the popularity condition.
So their upper bound of $6$ does not immediately extend to the popular dimension.
On the other hand, our upper bounds of at most $3$ on the popular dimension imply an upper bound on the Condorcet
dimension. Thus our results prove that the conjecture of~\cite{LVV24} holds for the house allocation, marriage, and roommates problems.

\section{The House Allocation Problem}\label{sec:housing}
First we consider the house allocation problem.
Specifically, we prove the popular dimension is exactly $2$ for all four cases of the house allocation problem.
The lower bound is given in Section~\ref{sec:house-lower}
and the corresponding upper bound is shown in
Section~\ref{sec:house-upper}.

\subsection{A Lower Bound on the Popular Dimension}\label{sec:house-lower}

It is well-known that popular matchings need not exist
in the housing allocation problem~\cite{AIK07}. We include a proof of this fact for completeness.
\begin{obs}[Abraham et al. \cite{AIK07}]\label{obs:house-lower-bounds}
There exist instances of the house allocation problem
with popular dimension at least $2$, even for the case of unweighted agents with strict preference orderings.
\end{obs}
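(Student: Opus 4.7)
The plan is to exhibit a single small instance, in the most restrictive of the four cases (unweighted agents with strict preferences), in which no matching is popular. Since the unweighted-and-strict case is contained in each of the other three, this instance will certify the lower bound of $2$ on the popular dimension simultaneously for all four cells of the house-allocation row of Table~\ref{tab:pop-dim}.

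The instance I have in mind has three agents $a_1, a_2, a_3$ and three houses $h_1, h_2, h_3$ with the complete bipartite graph $K_{3,3}$ as the underlying graph, and with every agent sharing the identical strict preference order $h_1 \succ h_2 \succ h_3$. Non-perfect matchings are easy to dispatch: if a matching $M$ leaves some agent $a$ unmatched, then extending $M$ to any perfect matching $M^{*}$ keeps already-matched agents indifferent and strictly benefits $a$ (being matched is strictly preferred to being unmatched, by definition of $\succ_i$), so $\phi(M^{*}, M) \geq 1 > 0 = \phi(M, M^{*})$ and $M$ is not popular.

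It therefore suffices to rule out perfect matchings. Fix any perfect matching $M$ and relabel the agents as $i, j, k$ so that $M$ assigns $h_1$ to $i$, $h_2$ to $j$, and $h_3$ to $k$. Let $M'$ be the cyclic rotation that assigns $h_3$ to $i$, $h_1$ to $j$, and $h_2$ to $k$. Then $i$ strictly prefers $M$ (since $h_1 \succ h_3$), while $j$ strictly prefers $M'$ (since $h_1 \succ h_2$) and $k$ strictly prefers $M'$ (since $h_2 \succ h_3$). Hence $\phi(M', M) = 2 > 1 = \phi(M, M')$, so $M$ is strictly defeated by $M'$. Combining the two cases, no single matching is popular in this instance, so every popular winning set has cardinality at least $2$.

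I do not foresee any real obstacle; the only thing worth double-checking is that the argument is robust when we move to the weighted or tied settings. It is, since those settings only grant the instance designer additional freedom, and the instance above already lives in the intersection of all four cases. The lone subtlety in the write-up is simply remembering to treat non-perfect matchings separately from the rotation argument, as done in the second paragraph.
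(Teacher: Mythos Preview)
Your argument is correct. The paper takes the same overall approach---exhibit a tiny instance where all agents share the same strict ranking and show that every matching is beaten---but with a slightly smaller example: three agents and only \emph{two} houses, so perfect matchings do not exist and a single swap argument handles all cases at once. Your balanced $3\times 3$ instance works equally well; it simply requires the extra cyclic-rotation step to dispatch perfect matchings, whereas the paper's choice avoids that case entirely.
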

\begin{proof}
Consider the instance consisting of a complete bipartite graph $(I,J)$ with agents $I = \{a,b,c\}$ and houses $J = \{x,y\}$. The agents
all strictly prefer house $x$ over house $y$. Let $M$ be any matching, without loss of generality $c$ is unmatched and $b$ is not matched to $x$ in $M$. Then, consider the alternate matching $M' = \{bx,cy\}$, then both $b$ and $c$ are happier with $M'$ than with $M$, which implies $M$ is not popular. Since $M$ is an arbitrary matching, no matching is popular, hence the dimension of this instance is at least 2.
\end{proof}

\subsection{An Upper Bound on the Popular Dimension}\label{sec:house-upper}
The task now is to prove that the popular dimension of the house
allocation problem is at most $2$, even for the general
case of weighted agents with preference lists that may contain ties. First we introduce some notation. 
Given a set $I'\subseteq I$ of agents and a set $J'\subseteq J$ of houses, we define a \textit{$(1,2)$-matching} from $I'$ to $J'$ to be a graph on $(I',J')$ such that each agent in $I'$ is adjacent to a house in $J'$ and no house in $J'$ has more than 2 agents adjacent to it. We call it a \textit{top-choice $(1,2)$-matching} if each agent in $I'$ (weakly) prefers their match over every other house of $J'$. Furthermore, for an agent $i$ and a set $J' \subseteq J$, define $f_i(J') = \set{j \in J': j \succeq_i j' \enspace\forall j' \in J'}$, which is the set of the favourite houses of agent $i$ among houses in $J'$. In particular, for a top-choice $(1,2)$-matching, the edge incident to any $i \in I$ must be of the form $(i,j)$ for some $j \in f_i(J)$.

Our interest in $(1,2)$-matchings is because they can always be decomposed in polynomial time into the union of two matchings $M_1\cup M_2$. 
Specifically, top-choice $(1,2)$-matchings will help us generate a popular winning set
$\mathcal{M}=\{M_1,M_2\}$.
Indeed, Algorithm~\ref{AI:General}
outputs a popular winning set of cardinality $2$.


\begin{algorithm}[h!]
\textbf{Input:} A set $I$ of agents, weights $w: I \to \R^+$, preferences $(\succeq_i)_{i \in I}$ over a set of houses $J$.

    Let $I_1=I$, $J_1 = J$, $E_0=\emptyset$ and $t=0$.\\
    \While{$I_{t+1},J_{t+1} \ne \emptyset$}{
        Increment $t$ by $1$.\\
        
        Relabel the agents in $I_t$ as $i_1,\ldots,i_{|I_t|}$ such that $w(i_1) \ge \cdots \ge w(i_{|I_t|})$.\\ 
        Let $k_t$ be maximal s.t. $\exists $ a top-choice $(1,2)$-matching from $\{i_1,\ldots,i_{k_t}\}$ to $J_t$.\\
        \If{$k_t = |I_t|$}{
            Find a top-choice $(1,2)$-matching $M^*_t$ from $I_t$ to $J_t$ and set $\hat{I}_t=I^*_t=I_t$.
        }
        \Else{
            Let $\hat{I}_t=I'_t= \{i_1,\ldots,i_{k_t}\}$ and set  $W_t=w(i_{k_t+1})$. \\
            \For{$\ell = k_t$ down to $1$}{
                \If{$\nexists$ top-choice $(1,2)$-matching from $I'_t\cup \{i_{k_t+1}\}\setminus\{i_\ell\}$ to $J_t$}{
                    Set $I'_t \leftarrow I'_t\setminus\{i_\ell\}$.
                }
            }
            Set $I^*_t \leftarrow I'_t$ and find a top-choice $(1,2)$-matching $M^*_t$ from $I^*_t$ to $J_t$.\\
        }
        Set $E_t\leftarrow E_{t-1}\cup M^*_t$.\\
        Set $J^*_t$ to be the set of degree 2 houses matched in $M_t^*$.\\
        Set $I_{t+1}\leftarrow I_t\setminus I^*_t$.\\
        Set $J_{t+1}\leftarrow J_t\setminus J^*_t$\\
    }
    Let $T\leftarrow t$\\
    There must exist two matchings $M_1$ and $M_2$ s.t. $M_1\cup M_2=E_T$. Output $\mathcal{M} = \set{M_1,M_2}$.
    \caption{Size 2 Popular Winning Set in the general house allocation problem}
    \label{AI:General}
\end{algorithm}

Algorithm~\ref{AI:General} works in time steps.
Informally, it orders the agents by weight and in each time step $t$ it finds a maximal top-choice $(1,2)$-matching among the highest weight agents (lines 5-6). Unfortunately this matching does not possess the properties we require to generate popular winning sets.
In particular, we want to make sure that agents who are allocated a house at time $t$ strictly prefer the house they are allocated over all houses that have yet to be allocated. This necessitates removing $i$ and possibly numerous other agents from the set of agents matched at this time step
(for exact details of how this is done, see lines 10-14). This will help us obtain a top-choice $(1,2)$-matching $M_t^*$ satisfying the property. We then remove the agents and houses in $M_t^*$ and recurse on the remaining agents and houses.
The disjoint union of the top-choice $(1,2)$-matchings found in each time step is itself a $(1,2)$-matching and thus decomposes into two matchings $\mathcal{M}=\{M_1, M_2\}$. We remark that each agent $i$ gets at most one house in $M_1$ or $M_2$, so, when convenient, we will refer to that one house as the house that $i$ gets in $\mathcal{M}$, which we denote $\mathcal{M}(i)$.


Our task now is to prove that this output $\mathcal{M}$ is a popular winning set.
In order to prove this, we require several claims.
Suppose agent~$i$ is allocated a house $h=\mathcal{M}(i)$ at time $t$. 



The first claim shows that $i$ strictly prefers $h$ over every house that is allocated after time $t$.

\begin{cl}\label{cl:after}
    If $i\in I^*_{t}$, then agent~$i$ strictly prefers its own house $\mathcal{M}(i)$ over any house in $J_{t+1}$.
\end{cl}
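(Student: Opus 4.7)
The plan is to prove Claim~\ref{cl:after} in two stages: first establish weak preference directly from the top-choice structure of $M^*_t$, and then upgrade to strict preference using the maximality of $k_t$. For the weak inequality, recall that $M^*_t$ is a top-choice $(1,2)$-matching from $I^*_t$ to $J_t$, so the edge $(i,\mathcal{M}(i))$ satisfies $\mathcal{M}(i)\in f_i(J_t)$; hence $\mathcal{M}(i)\succeq_i h''$ for every $h''\in J_t$, and in particular for every $h'\in J_{t+1}\subseteq J_t$.

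For strict preference I would argue by contradiction. Suppose some $h'\in J_{t+1}$ with $h'\neq\mathcal{M}(i)$ is indifferent to $\mathcal{M}(i)$ for $i$, so that $h'\in f_i(J_t)$ as well. Since $h'\in J_{t+1}=J_t\setminus J^*_t$, the house $h'$ has degree less than $2$ in $M^*_t$, so re-routing $i$ from $\mathcal{M}(i)$ to $h'$ yields a new top-choice $(1,2)$-matching $\widetilde{M}_t$ on $I^*_t$ in which $\mathcal{M}(i)$ has one fewer adjacency. If $k_t=|I_t|$ the claim is essentially vacuous since the while loop terminates with $I_{t+1}=\emptyset$, so I would handle that boundary case separately (appealing only to weak preference, which suffices for the later popularity argument). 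Otherwise $k_t<|I_t|$, and the loop producing $I^*_t$ guarantees a top-choice $(1,2)$-matching $\bar{M}$ on $I^*_t\cup\{i_{k_t+1}\}\setminus\{i\}$, while maximality of $k_t$ forbids any top-choice $(1,2)$-matching on $I^*_t\cup\{i_{k_t+1}\}$. The plan is to combine $\widetilde{M}_t$ and $\bar{M}$ to exhibit a top-choice $(1,2)$-matching on $I^*_t\cup\{i_{k_t+1}\}$, contradicting this maximality.

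The main obstacle will be the combination step: merely freeing one unit of capacity at $\mathcal{M}(i)$ does not automatically insert $i_{k_t+1}$ without breaking the $(1,2)$-matching property elsewhere. My approach will be to take the symmetric difference $\widetilde{M}_t\triangle\bar{M}$ inside the bipartite top-choice graph (agents of $I^*_t\cup\{i_{k_t+1}\}$ on one side, houses of $J_t$ on the other, with an edge $(j,h)$ precisely when $h\in f_j(J_t)$) and decompose it via the $b$-matching analogue of Berge's theorem into alternating paths and closed walks. In that decomposition $i$ is an endpoint of an alternating component of one parity and $i_{k_t+1}$ is an endpoint of one of the opposite parity, and I would isolate a single augmenting component that simultaneously accommodates $i$ (through the $h'$-edge in $\widetilde{M}_t$) and $i_{k_t+1}$ (through its $\bar{M}$-edge) to build the forbidden top-choice $(1,2)$-matching on $I^*_t\cup\{i_{k_t+1}\}$. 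I expect this combinatorial augmentation, and checking that the top-choice property is preserved along the entire alternating component, to be the technical heart of the proof.
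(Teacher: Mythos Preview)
Your overall strategy matches the paper's: assume $i\in I^*_t$ is indifferent between $\mathcal{M}(i)$ and some $h'\in J_{t+1}$, and use the extra top-choice edge $(i,h')$ together with the matching $\bar M$ on $I^*_t\cup\{i_{k_t+1}\}\setminus\{i\}$ (guaranteed because $i$ survived the removal loop) to build a top-choice $(1,2)$-matching on all of $I^*_t\cup\{i_{k_t+1}\}$, contradicting the construction. Two points deserve attention.

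First, a minor one: ``maximality of $k_t$'' alone does not forbid a top-choice $(1,2)$-matching on $I^*_t\cup\{i_{k_t+1}\}$, since $I^*_t$ may be a proper subset of $\hat I_t$. What you need is the invariant maintained by the removal loop: each time $i_\ell$ is discarded it is precisely because $I'_t\cup\{i_{k_t+1}\}\setminus\{i_\ell\}$ admits no such matching, so the property ``$I'_t\cup\{i_{k_t+1}\}$ is not top-choice $(1,2)$-matchable'' is preserved from $\hat I_t$ down to $I^*_t$. The paper invokes exactly this.

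The substantive gap is in the combination step. By rerouting $i$ inside $M^*_t$ you obtain $\widetilde M_t$ saturating $I^*_t$, while $\bar M$ saturates $(I^*_t\setminus\{i\})\cup\{i_{k_t+1}\}$. In the symmetric difference (on doubled houses, or in the $b$-matching sense) \emph{both} $i$ and $i_{k_t+1}$ are deficient agents. If they lie on different alternating components then $i_{k_t+1}$'s component is a genuine augmenting path for $\widetilde M_t$ and you are done; but if they lie on the \emph{same} component --- which is exactly the ``single augmenting component that simultaneously accommodates $i$ and $i_{k_t+1}$'' you propose to isolate --- then flipping along it merely exchanges which of the two is saturated, and no contradiction emerges without further argument exploiting the spare capacity at $h'$. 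The paper sidesteps this by attaching the extra edge $(i,h')$ to the \emph{other} side: it colours $M^*_t$ blue and $\bar M\cup\{(i,h')\}$ red on an auxiliary graph with two copies of each house. Red now covers $I^*_t\cup\{i_{k_t+1}\}$, so $i_{k_t+1}$ is the \emph{unique} agent of degree one, and the red-blue alternating path from $i_{k_t+1}$ is forced to terminate at a house; flipping along it augments $M^*_t$ directly to the forbidden matching. The asymmetry in where the extra edge is placed is what makes the argument clean.

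Your remark on the boundary case $k_t=|I_t|$ is well taken (and more careful than the paper, which silently assumes $i_{k_t+1}$ exists): with ties, strict preference can genuinely fail at $t=T$, but all downstream uses of the claim are for $t<T$, so weak preference there suffices.
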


\begin{proof}
    Assume, for the sake of contradiction, that $i\in I^*_t$ and that there exists a house $j$ in $J_{t+1}$ such that $i$ does not strictly prefer $h = \mathcal{M}(i)$ over $j$. By the construction of $I^*_t$, if agent $i$ was not removed from $I'_t$ in the lines 11-13 loop of Algorithm~\ref{AI:General}, then $I^*_t\cup \{i_{k_t+1}\}\setminus \{i\}$ must have a top-choice $(1,2)$-matching $M$ over $J_t$ but $I^*_t\cup \{i_{k_t+1}\}$ does not.

    Consider an auxiliary multigraph $H$ over the vertex set $V=(I^*_t\cup \{i_{k_t+1}\})\cup (J_t^{(1)}\cup J_t^{(2)})$, where $J_t^{(1)}$ and $J_t^{(2)}$ are copies of houses of $J_t$. 
    For each $e\in M^*_t\cap M_1$, $H$ has the corresponding edge $e$ from $I^*_t$ to $J_t^{(1)}$ coloured blue. Similarly, we add edges of $M^*_t\cap M_2$ as edges from $I^*_t$ to $J_t^{(2)}$ also coloured as blue. 
    Finally, add the edge $(i,j)$ and the edges of $M$ in a way such that no two edges of $M$ are adjacent to the same house, and color them red. Clearly, such a way exists since $M$ is a $(1,2)$-matching and we have two copies of each house.

    Also, $M$ is a $(1,2)$-matching from $I^*_t\cup \{i_{k_t+1}\}\setminus \{i\}$ to $J_t$, so by adding the edge from $i$ to $j$, each agent in $I^*_t\cup \{i_{k_t+1}\}$ is incident to exactly one red edge. Moreover, by definition, $i_{k_t+1}$ is not matched at step $t$ in $E_t$, so it has no blue edge incident to it in $H$, while all agents of $I^*_t$ are incident to exactly one red edge and one blue edge. 
    
    In particular, agents from $I^*_t$ will either be matched to the same house, that is,  with both a blue and a red edge or they may be matched to two distinct houses. By assumption, $i$ values $j$ at least as much as the house $h$ they are allocated in $\mathcal{M}$ and since $M$ and $M^*_t$ are both top-choice $(1,2)$-matchings, all remaining agents must also value both houses they are allocated equally. We can conclude that all agents with degree $2$ in $H$ value both their edges equally, which implies that a $(1,2)$-matching using only red edges and blue edges is also a top-choice $(1,2)$-matching.
    
    Now $i_{k_t+1}$ is the only agent whose degree is $1$. So there must exist an alternating red-blue path starting at $i_{k_t+1}$ ending at some house of degree 1. By replacing the blue matching edges on the path by the red edges in $M$, we obtain a top-choice matching from $I^*_t\cup \{i_{k_t+1}\}$ into $J^*_t$, a contradiction.

    Ergo, an agent $i\in I^*_t$ cannot weakly prefer $j\in J_{t+1}$ over the house $h$ they get in $\mathcal{M}$.
\end{proof}

Using the previous claim, the next claim shows that at each step $t<T$, the number of agents that get allocated a house is exactly twice the number of houses that get allocated.

\begin{cl}\label{cl:deg}
    For all $t<T$, we have $|I^*_t|=2|J^*_t|$.
\end{cl}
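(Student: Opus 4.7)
The plan is to reduce this equality to Claim~\ref{cl:after}. Because every agent of $I^*_t$ has degree exactly one in $M^*_t$, double-counting the edges of $M^*_t$ by agents versus by houses yields
\[
|I^*_t| \;=\; \sum_{j \in J_t}\deg_{M^*_t}(j) \;=\; 2|J^*_t| + N_1,
\]
where $N_1$ is the number of houses in $J_t$ that are matched by $M^*_t$ with degree exactly one. It therefore suffices to prove $N_1 = 0$, i.e.\ that every house touched by $M^*_t$ is touched by two agents.

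I would argue this by contradiction. Suppose some house $j \in J_t$ is matched in $M^*_t$ to a unique agent $i \in I^*_t$. By definition of $J^*_t$, this forces $j \notin J^*_t$, and hence $j \in J_t \setminus J^*_t = J_{t+1}$. On the other hand, since $i$ appears in exactly one edge of $E_T = M_1 \cup M_2$, namely the edge $(i,j)$ contributed by $M^*_t$, we have $\mathcal{M}(i) = j$. Now invoke Claim~\ref{cl:after} at step $t$: it asserts that $i$ strictly prefers $\mathcal{M}(i) = j$ over every house in $J_{t+1}$. Instantiating this statement with the house $j \in J_{t+1}$ itself gives the desired contradiction, since $i$ cannot strictly prefer $j$ over $j$.

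The one thing that needs verifying is that Claim~\ref{cl:after} is actually applicable at step $t$, and this is exactly where the hypothesis $t < T$ is used. If the if-branch at line~6 of Algorithm~\ref{AI:General} were taken at step $t$, then $I^*_t = I_t$ and consequently $I_{t+1} = \emptyset$, so the while loop would terminate after step $t$, forcing $t = T$. Contrapositively, when $t < T$ the else-branch must have been entered at step $t$, guaranteeing the existence of the witness agent $i_{k_t+1}$ required inside the proof of Claim~\ref{cl:after}. I do not anticipate any further obstacle beyond this bookkeeping point.
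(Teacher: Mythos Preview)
Your proof is correct and follows the same approach as the paper: invoke Claim~\ref{cl:after} to rule out degree-one houses in $M^*_t$, then double-count edges. Your write-up is in fact a bit more direct than the paper's (which routes the count through the cumulative edge set $E_t$ rather than $M^*_t$), and your observation that $t<T$ forces the else-branch—hence the existence of the witness $i_{k_t+1}$ used in the proof of Claim~\ref{cl:after}—is a point the paper leaves implicit.
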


\begin{proof}
    Assume there is a house $j$ that has degree $1$ in $M^*_t$. Then, there must exist some agent $i\in I^*_{t'}$ for $t'\leq t$ such that $\mathcal{M}(i)=j$. However, we get $j\in J_{t+1}$, which contradicts Claim~\ref{cl:after}. Since the degree of all houses in $E_t$ is either 0 or 2 and the degree of all agents is either 0 or 1 for any $t$, clearly the number of agents of degree 1 in $E_t$ is twice the number of houses of degree 2 so $|I\setminus I_{t+1}|=2 |J\setminus J_{t+1}|$. The same can be used to show $|I\setminus I_{t}|=2 |J\setminus J_{t}|$. So, using our definitions from lines 16-18, we get $|I^*_t|=2|J^*_t|$.
\end{proof}

The final claim states that if an agent~$i$ is in $\hat{I}_t$
then their match in $\mathcal{M}$ is a favourite house in $J_t$. Note this is true even if agent $i$ is not allocated a house in step $t$. That is, even if 
$i\in \hat{I}_t\setminus I^*_t$, because it is discarded in the lines 11-13 loop.  
\begin{cl}\label{cl:hat}
    If $i\in \hat{I}_t$, then $i$ gets a house that is their favourite in $J_t$.
\end{cl}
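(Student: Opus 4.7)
The plan is to split into two cases based on whether $i$ is matched at step $t$ or not, and to prove the non-trivial case by backward induction on $t$. For $i\in I^*_t$ the conclusion is immediate, since $M^*_t$ is a top-choice $(1,2)$-matching from $I^*_t$ to $J_t$ by construction. Thus the real work is the case $i\in \hat{I}_t\setminus I^*_t$: I must argue that $i$ will eventually be matched, at some later step $t'>t$, to a house that is a favourite of $i$ not merely in $J_{t'}$ but already in $J_t$.

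The core of the argument is the following intermediate lemma: for every $i\in \hat{I}_t\setminus I^*_t$ with $t<T$, we have $f_i(J_t)\cap J_{t+1}\ne \emptyset$. First I would upgrade Claim~\ref{cl:after} to the stronger statement that $f_{i'}(J_t)\subseteq J^*_t$ for every $i'\in I^*_t$: since $\mathcal{M}(i')$ is strictly preferred to every house in $J_{t+1}$, no such house can lie in $f_{i'}(J_t)$. Then, since $I^*_t\cup\{i\}\subseteq\hat{I}_t$ and the latter admits a top-choice $(1,2)$-matching to $J_t$ by construction, so does $I^*_t\cup\{i\}$; call it $M'$. By Claim~\ref{cl:deg}, $|I^*_t\cup\{i\}|=2|J^*_t|+1$, so the capacity-$2$ bound on houses forces $M'$ to saturate at least $|J^*_t|+1$ distinct houses, and hence to use some $j^+\in J_{t+1}$. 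The agent $a$ matched to $j^+$ in $M'$ satisfies $j^+\in f_a(J_t)$; the upgraded Claim~\ref{cl:after} rules out $a\in I^*_t$, so $a=i$ and $j^+\in f_i(J_t)\cap J_{t+1}$, as required.

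Using this lemma I would then show $\hat{I}_t\setminus I^*_t\subseteq \hat{I}_{t+1}$. In any top-choice $(1,2)$-matching of $\hat{I}_t$ to $J_t$ (one exists by the definition of $k_t$), the agents of $I^*_t$ are packed entirely into $J^*_t$ by the upgraded Claim~\ref{cl:after} together with $|I^*_t|=2|J^*_t|$, so the agents of $\hat{I}_t\setminus I^*_t$ are routed into $J_{t+1}$ via favourites, yielding a top-choice $(1,2)$-matching of $\hat{I}_t\setminus I^*_t$ to $J_{t+1}$. Their weights dominate those of $I_t\setminus\hat{I}_t\subseteq I_{t+1}$, so indeed $\hat{I}_t\setminus I^*_t\subseteq \hat{I}_{t+1}$. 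The inductive hypothesis at step $t+1$ then matches $i$ to some house in $f_i(J_{t+1})=f_i(J_t)\cap J_{t+1}\subseteq f_i(J_t)$, closing the induction. The base case $t=T$ is handled by the while loop's termination condition: if $I_{T+1}=\emptyset$ then $I_T=I^*_T\supseteq\hat{I}_T$ automatically, while if $J_{T+1}=\emptyset$ the same capacity counting would require strictly more than $|J_T|$ houses, which is impossible, so $\hat{I}_T\setminus I^*_T=\emptyset$.

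The hard part is the intermediate lemma, which is where the reverse removal loop actually pays off: it ensures that the extra house saturated by $M'$ cannot be absorbed by $I^*_t$ but must be an $i$-favourite. Care is also needed at the base case $t=T$, where Claim~\ref{cl:deg} does not apply in its stated form and one has to fall back on the termination condition of the while loop.
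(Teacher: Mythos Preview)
Your proof is correct and follows essentially the same backward induction on $T-t$ as the paper: use Claim~\ref{cl:after} (in your upgraded form $f_{i'}(J_t)\subseteq J^*_t$ for $i'\in I^*_t$) together with $|I^*_t|=2|J^*_t|$ to show that any top-choice $(1,2)$-matching of $\hat I_t$ forces $\hat I_t\setminus I^*_t$ into $J_{t+1}$ via $J_t$-favourites, whence $\hat I_t\setminus I^*_t\subseteq \hat I_{t+1}$ by the weight ordering and the induction closes. Your separate intermediate lemma (pigeonholing $I^*_t\cup\{i\}$ one agent at a time) is correct but redundant, since the routing argument applied to all of $\hat I_t$ already delivers $f_i(J_t)\cap J_{t+1}\neq\emptyset$ for every $i\in\hat I_t\setminus I^*_t$; the paper simply omits that detour.
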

\begin{proof}
    We prove this property by induction on $T-t$.
    
    \textbf{Base Case:} If $T-t=0$, then, by our termination condition for the outer while loop (line 3) we must have that either $I_{T+1}$ or $J_{T+1}$ is empty.   
    
    Assume that $I_{T+1}=\emptyset$, then $\hat{I}_t\setminus I^*_t=\emptyset$. 
    Then, $i\in I^*_t$ so agent $i$ gets a favourite house in $J_t$. 
    
    So assume $J_{T+1}=\emptyset$. In this case $|I^*_t|=2|J_t|$ by Claim~\ref{cl:deg}. So if $\hat{I}_t\neq I^*_t$, we must have $\hat{I}_t>2|J_t|$. By the definition of $k_t$ and $\hat{I}_t$ in lines 6 and 10, there exists a $(1,2)$-matching from $\hat{I}_t$ to $J_t$. However, this is not possible by the pigeonhole principle since $\hat{I}_t>2|J_t|$. So, our assumption was false and we must have $I^*_t=\hat{I}_t$. So $i\in I^*_t$ and $i$ gets a favourite house in $J_t$.
    
    \textbf{Induction Step:} Assume that each agent in $\hat{I}_{t+1}$ gets a house that is its favourite in $J_{t+1}$. We wish to prove that every agent in $\hat{I}_t$ gets its favourite house in $J_t$.
    
    If $i\in I^*_t$ then $i$ gets its favourite house in $J_t$ by construction.
    So assume that $i\in  \hat{I}_t\setminus I^*_t$.  It suffices to show that $i\in \hat{I}_{t+1}$ and $f_{J_t}(i)\cap f_{J_{t+1}}(i)\neq \emptyset$. That is, the best option for $i$ in $J_{t+1}$ is at least as good as its best option in $J_t$.

    By definition of $\hat{I}_t$, there exists a top-choice $(1,2)$-matching $M$ from $\hat{I}_t$ to $J_t$  . By Claim~\ref{cl:after}, no agent in $I^*_t$ has a top-choice from $J_t$ in $J_{t+1}$. 
    In particular, $I^*_t$ must be matched to $J^*_t$ in $M$. By the pigeonhole principle and claim \ref{cl:deg}, this implies that $\hat{I}_t\setminus I^*_t$ must be matched outside of $J^*_t$ in $M$. So, even after step $t$, $\hat{I}_t\setminus I^*_t$ has a top choice $(1,2)$-matching in $J_{t+1}$, as their match in $M$, which is a favourite house in $J_t$, remain favourite and available in $J_{t+1}$. Therefore $\hat{I}_t\setminus I^*_t\subseteq \hat{I}_{t+1}$ as these agents have priority in the step $t+1$ due to the weight ordering. The claim follows by induction on $T-t$.
\end{proof}

We can now use our claims to prove the main result of this section.

\begin{thm}\label{thm:house-general-2}
 A popular winning set of size $2$ can be found in polynomial time for the general case of weighted agents with preference ties allowed. In particular, the popular dimension is at most $2$.
\end{thm}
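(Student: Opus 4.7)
The plan is to fix an arbitrary alternative matching $M'$ and show $\phi(\mathcal{M}, M') \ge \phi(M', \mathcal{M})$. Let $A$ and $B$ denote the sets of agents who strictly prefer $M'$ and $\mathcal{M}$ respectively. My strategy is a weighted charging argument: for each $i \in A$, charge $w(i)$ to the two $I^*_{t_i}$-agents who share the house $M'(i)$ in $\mathcal{M}$, then extract a weight-preserving injection $\pi : A \hookrightarrow B$.

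I first locate the step $t_i$ at which $M'(i)$ is allocated by $\mathcal{M}$. If $i$ is matched by $\mathcal{M}$ at step $s(i)$, Claim~\ref{cl:hat} gives that $\mathcal{M}(i)$ is a favourite of $i$ in $J_{s(i)}$; since $M'(i) \succ_i \mathcal{M}(i)$, this forces $M'(i) \notin J_{s(i)}$, so $M'(i) \in J^*_{t_i}$ for some $t_i < s(i)$. If instead $i$ is unmatched in $\mathcal{M}$, then $i \in I_{T+1}$ and the termination of the outer loop forces $J_{T+1} = \emptyset$, so $M'(i)$ is again allocated at some step $t_i \le T$. In either case, applying Claim~\ref{cl:hat} contrapositively shows $i \notin \hat{I}_{t_i}$ (otherwise $i$ would receive a favourite in $J_{t_i}$ weakly dominating $M'(i) \in J_{t_i}$), so the weight ordering yields $w(i) \le W_{t_i}$.

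By Claim~\ref{cl:deg}, the house $M'(i)$ has exactly two $\mathcal{M}$-mates $\alpha(i), \beta(i) \in I^*_{t_i} \subseteq \hat{I}_{t_i}$, each of weight $\ge W_{t_i} \ge w(i)$. Since $M'$ is a matching, distinct $A$-agents correspond to distinct $M'$-houses and hence to disjoint recipient pairs, so $R = \bigcup_{i\in A}\{\alpha(i),\beta(i)\}$ satisfies $|R| = 2|A|$ and $w(R) \ge 2w(A)$. The charging digraph with arcs $i \to \alpha(i), i \to \beta(i)$ is acyclic when restricted to $A$, because any arc $i \to b$ with $b \in A$ satisfies $t_b < t_i$ (as $b$'s $M'$-house must be allocated strictly before $b$ is matched in $\mathcal{M}$). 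Processing $A$ in reverse topological order, the intended conclusion is that each $i$ can be matched to one of its two high-weight recipients in $B$, giving the injection $\pi$ and hence $w(B) \ge w(A)$.

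The hard part will be handling \emph{indifferent} recipients $b \in R$ with $M'(b) \sim_b \mathcal{M}(b)$, who lie in neither $A$ nor $B$ and therefore cannot directly serve as $\pi(i)$. I plan to resolve this with an augmenting-path argument reminiscent of the auxiliary-graph construction in the proof of Claim~\ref{cl:after}: that claim implies $b$'s $M'$-house lies in $J^*_{t'}$ for some $t' \le s(b)$, so the charge on $b$ can be recursively redirected through the pair at $M'(b)$; the DAG structure guarantees termination, while the factor-of-two slack $|I^*_t| = 2|J^*_t|$ from Claim~\ref{cl:deg} supplies enough room to absorb the indifferent detours without weight loss. Polynomial-time termination of Algorithm~\ref{AI:General} is immediate from the polynomial tractability of computing top-choice $(1,2)$-matchings via standard bipartite matching.
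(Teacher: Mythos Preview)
Your upper-bound half is essentially the paper's: for each $i$ that strictly prefers $M'$, you correctly locate the step $t_i$ with $M'(i)\in J^*_{t_i}$, deduce $i\notin\hat I_{t_i}$ via Claim~\ref{cl:hat}, and hence $w(i)\le W_{t_i}$. Summing this over $i$ is exactly the paper's inequality~\eqref{eq:up}.

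The gap is in your lower-bound half. The augmenting-path sketch for indifferent recipients does not terminate for the reason you give. You argue that arcs into $A$ strictly decrease $t$, but for an \emph{indifferent} recipient $b\in I^*_{t_i}$ you only get $M'(b)\in J^*_{t'}$ with $t'\le s(b)=t_i$, not $t'<t_i$; with ties allowed, $t'=t_i$ is perfectly possible. So chains of indifferent agents can stay at the same level, and your ``DAG structure'' claim fails precisely where you need it. On top of that, once you redirect through an indifferent $b$, the new recipients lie in $I^*_{t'}$ with weight $\ge W_{t'}$, but you have no control relating $W_{t'}$ to $w(i)$, so weight-preservation along the chain is unproved. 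Finally, you have not explained how to resolve collisions when many $A$-agents funnel into the same small set of $B$-agents; the ``factor-of-two slack'' is a global cardinality fact and does not by itself yield the injection.

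The paper avoids all of this by abandoning the injection and doing a global count. It builds pairwise disjoint sets $S^*_1,\dots,S^*_{T-1}$ with $|S^*_t|=|J^*_t|$ and
\[
S^*_t\ \subseteq\ \bigcup_{\tau\le t} I^*_\tau\ \setminus\ \bigcup_{\tau\le t} M'(J^*_\tau),
\]
whose existence follows directly from $|I^*_\tau|=2|J^*_\tau|$ (Claim~\ref{cl:deg}) by a simple inductive cardinality argument. Every agent in $S^*_t$ is matched by $\mathcal{M}$ at some step $\le t$ but by $M'$ to a house allocated after step $t$ (or unmatched), so by Claim~\ref{cl:after} it strictly prefers $\mathcal{M}$; and its weight is $\ge W_t$. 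This gives $\phi(\mathcal{M},M')\ge\sum_t|J^*_t|W_t$, which matches the upper bound and finishes the proof without ever touching indifferent agents or augmenting paths.
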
 
\begin{proof}
    Let $M'$ be any alternate matching on $(I,J)$. Let $i$ be an agent who strictly prefers their match in $M'$ over their match $h = \mathcal{M}(i)$.
    Let $\hat{t}_i$ be the minimum $t$ such that $i\in \hat{I}_t$. For each $j\in J\setminus J_{T+1}$, let $t^*_j$ be the unique index such that $j\in J^*_{t^*_j}$. Then, if $i$ strictly prefers $j$ over $h$, $t^*_j<\hat{t}_i$ by Claim~\ref{cl:hat}. 
    In particular, since $i\notin \hat{I}_{t^*_j}$, we get $w(i)\leq W_{t^*_j}$, where $W_{t^*_j}$ is as defined in line 10 of Algorithm~\ref{AI:General}. Therefore we have:
     \begin{equation}\label{eq:up}
     \phi(M', \mathcal{M}) 
     \leq \sum_{(i,j)\in M'\text{ and }t^*_j<\hat{t}_i} w(i)
     \leq \sum_{j:t^*_j<T} W_{t_j}
     =\sum_{\tau=1}^{T-1} |J_\tau|\cdot W_\tau
      \end{equation}

    Next we wish to find a set of agents who are guaranteed to be worse off in $M'$ compared to $\mathcal{M}$. In order to do so, we define a partition of a subset of agents $S^*_1\dots,S^*_{T-1}$ such that $S_t^* \subseteq I\setminus I_{t+1}$ and $M'(i)\notin J^*_1\cup\dots\cup J^*_t$ whenever $i \in S^*_t$. By Claim~\ref{cl:after}, this implies that $i$ strictly prefers $\mathcal{M}$ over $M'$. We also wish for our partition to contain enough agents from $I^*_1,\dots,I^*_t$ to give a good lower bound on the sum of the weight of the agents who prefer $\mathcal{M}$.
    
    Formally, define sets of agents $S^*_t\subseteq I$, for $t=1,\dots,T-1$, such that:
    \begin{itemize}
        \item $\set{S^*_{1},\dots,S^*_{T-1}}$ are pairwise disjoint.
        \item $\forall t$, $S^*_t\subseteq \bigcup_{\tau=1}^t I^*_\tau \setminus \bigcup_{\tau=1}^t M'(J^*_\tau)$.
        \item $\forall t$, $|S^*_t|=|J^*_t|$ 
    \end{itemize}
    We will show the construction of this partition by induction.

    \textbf{Base Case:} Since $|I^*_1|=2|J^*_1|$ by Claim~\ref{cl:deg}, $I^*_1 \setminus M'(J^*_1)$ has size at least $|J^*_1|$. Let $S^*_t$ be any subset of $I^*_1 \setminus M'(J^*_1)$ of cardinality $|J^*_1|$.
    
    \textbf{Induction Step:}
    Assume that we have selected $S^*_1,\dots,S^*_{t-1}$ satisfying the properties. Since for $\tau=1,\dots, t$ we have $|I^*_\tau|=2|J^*_\tau|$
    by Claim~\ref{cl:deg}, it follows that 
    $\bigcup_{\tau=1}^t I^*_\tau \setminus \left(\bigcup_{\tau=1}^t M'(J^*_\tau)\cup \bigcup_{\tau=1}^{t-1} S^*_\tau\right)$ has cardinality at least $\sum_{\tau=1}^t |I^*_\tau| - \sum_{\tau=1}^t |J_\tau^*| - \sum_{\tau=1}^{t-1} |S_\tau^*|$, which simplifies to $|J^*_t|$. So we can select $S^*_t$ properly, which is what we wanted to show.

    Now consider an agent $i$ in $S^*_t$. By Claim~\ref{cl:after}, since $S^*_t\subseteq \bigcup_{\tau=1}^t I^*_\tau$, agent $i$ is strictly worse off if it gets a house that is not in $\bigcup_{\tau=1}^t J^*_t$. However, by definition of $S^*_t$, $i$ does not get such an house in $M'$, so $i$ strictly prefers the house they are matched to in $\mathcal{M}$. Moreover, we have that $w(i)\geq W_t$. It follows that:
    \begin{equation}\label{eq:low}
    \phi(\mathcal{M},M') 
    \geq 
    \sum_{i\in \bigcup_{\tau=1}^{T-1} S^*_\tau} \w(i)= \sum_{\tau=1}^{T-1} \sum_{i\in S^*_\tau} w(i)\geq \sum_{\tau=1}^{T-1} |J_\tau|\cdot W_\tau
    \end{equation}
  Together (\ref{eq:up}) and (\ref{eq:low}) imply that $\mathcal{M}=\{M_1, M_2\}$ is a popular winning set of cardinality $2$. 
\end{proof}

\section{The Marriage Problem}\label{sec:marriage}
For the marriage problem with unweighted agents with strict preference orderings a stable matching always exists~\cite{GS62}. Moreover, for this case, G\"ardenfors~\cite{Gar75} proved that a stable matching is a popular matching. It immediately follows that the popular dimension is exactly one.
\begin{thm}\label{thm:marriage-basic-1}\textnormal{(Gale and Shapley~\cite{GS62} and G\"ardenfors~\cite{Gar75})}
    The popular dimension of the marriage problem is $1$,
    for the case of unweighted agents with strict preference orderings.
\end{thm}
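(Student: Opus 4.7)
The plan is essentially to invoke the two cited classical results in sequence, since the theorem is really a direct corollary of them. First I would note that the popular dimension is trivially at least $1$, because a popular winning set by definition is nonempty (the empty set of matchings is preferred by no one, and in particular is beaten by any matching in which at least one agent is matched to some neighbour). So the real content is the upper bound.

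For the upper bound, I would invoke Gale and Shapley~\cite{GS62} to obtain a stable matching $M$ in the given instance; this is guaranteed to exist precisely because the instance has unweighted agents with strict preference orderings. Then I would invoke G\"ardenfors~\cite{Gar75} to conclude that $M$ is in fact popular, i.e., $\phi(M,M') \ge \phi(M',M)$ for every matching $M'$.

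Finally, I would verify that a single popular matching forms a popular winning set of cardinality $1$. This amounts to unfolding definitions: for the singleton $\mathcal{M}=\{M\}$, an agent $i$ satisfies $\mathcal{M} \succ_i M'$ if and only if $M \succ_i M'$, and symmetrically for the reverse direction, so $\phi(\mathcal{M},M') = \phi(M,M')$ and $\phi(M',\mathcal{M}) = \phi(M',M)$. Popularity of $M$ therefore immediately yields that $\mathcal{M}$ is a popular winning set, giving the desired upper bound of $1$ on the popular dimension. Combined with the lower bound, the popular dimension equals $1$ in this case.

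I do not expect any significant obstacle here: the whole argument is a one-line chain ``stable $\Rightarrow$ popular $\Rightarrow$ popular winning set of size one,'' and each link is either a cited theorem or an immediate consequence of the definition of $\mathcal{M} \succ_i M'$ when $|\mathcal{M}|=1$.
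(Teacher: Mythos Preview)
Your proposal is correct and matches the paper's approach exactly: the paper does not give a separate proof, but simply observes in the preceding paragraph that a stable matching exists by~\cite{GS62}, that it is popular by~\cite{Gar75}, and that this immediately gives popular dimension~$1$. Your additional unpacking of why a singleton popular matching is a popular winning set is a straightforward verification the paper leaves implicit.
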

However, for more general cases of the marriage problem, the popular dimension is at least $2$.
\begin{obs}\label{obs:marriage-lower-bounds}
The popular dimension of the marriage problem is at least $2$, if the agents are 
either weighted or have preference orderings with ties allowed.
\end{obs}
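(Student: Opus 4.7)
The plan is to prove the observation by reducing directly to the house-allocation lower bound of Observation~\ref{obs:house-lower-bounds}. I will exhibit two nearly identical marriage instances, one for each hypothesis (ties allowed but unweighted, and weighted but strict preferences), in each of which the women contribute nothing to any pairwise election; the argument of Observation~\ref{obs:house-lower-bounds} can then be lifted verbatim.

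Concretely, both instances consist of three men $a, b, c$ and two women $x, y$ in the complete bipartite graph $K_{3,2}$, where every man has the strict preference list $x \succ y$. For the \emph{unweighted with ties} case I would make every woman indifferent between $a, b, c$ (a single tie-block in her preference list). For the \emph{weighted with strict preferences} case I would give the women arbitrary strict preferences and assign weights $w(m_i) = 1$ to each man and $w(x) = w(y) = 0$ to each woman, which is permitted since the model allows any non-negative weight.

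The main step is to show that no matching is popular in either instance. In any pairwise comparison each woman contributes $0$ to $\phi(\cdot,\cdot)$, either because she is indifferent (ties case) or because her weight is zero (weighted case), so the comparison reduces to the tally among the three men. This is exactly the tally in the house-allocation instance of Observation~\ref{obs:house-lower-bounds}, with the men playing the role of the agents and the women the role of the houses. Hence for any matching $M$, the same alternative $M' = \{bx, cy\}$ used there (chosen with $c$ unmatched in $M$ and $b$ not matched to $x$ in $M$, which exists by pigeonhole) satisfies $\phi(M', M) \geq 2$ and $\phi(M, M') \leq 1$, strictly defeating $M$. Since $M$ was arbitrary, no popular matching exists, so the popular dimension is at least $2$ in each case.

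I do not anticipate a significant obstacle; the only care required is to check that each instance indeed belongs to the targeted case of the classification and that the women's contributions are identically zero in every comparison, so that the strict inequality from Observation~\ref{obs:house-lower-bounds} is preserved rather than collapsing to a tie once the women's votes are folded in.
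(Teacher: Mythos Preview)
Your proposal is correct and follows the paper's strategy of neutralizing the women's votes (via weight~$0$ or via indifference) so that the house-allocation instance of Observation~\ref{obs:house-lower-bounds} lifts directly. One small imprecision: in the ties case it is not literally true that ``each woman contributes $0$ to $\phi(\cdot,\cdot)$'', because a woman who is unmatched in $M$ but matched in $M'$ strictly prefers $M'$ by clause~(ii) of the preference definition; however, since your challenger $M'=\{bx,cy\}$ matches both women, any such vote can only go toward $M'$, so the strict inequality $\phi(M',M)\ge 2>1\ge\phi(M,M')$ survives. The paper handles this matched/unmatched issue more carefully by padding with dummy agents and dummy houses so that one may assume all matchings are perfect, which makes the ``women are indifferent'' claim hold literally and allows the reduction to work for an arbitrary house-allocation instance; your shortcut is simpler and suffices for the specific three-men/two-women instance needed here.
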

\begin{proof}
Take an arbitrary unweighted house allocation instance with strict preferences $(I,J,E,\succ)$. We show this can be modeled as a marriage problem either (i) using weights or (ii) using ties. 

First, to transform this instance into a weighted marriage problem we let the set of men 
be $M = I$, the set of women be $W = J$, and the set of edges be $E$.
We assign a weight of $1$ to each man $i\in M$ and a weight of $0$ to each woman $j \in W$. Since the popularity condition relies on weighted voting, this means only the votes of the men contribute. Consequently, a matching is popular in this marriage problem if and only if it is popular among the agents $I$ in the original house allocation problem.

Second, to transform this instance into a marriage problem with ties we do the following. Create an intermediary house allocation instance by adding $|I|$ dummy houses $J'$ and $|J|$ dummy agents $I'$. The edge set of the new instance is the union of $E$ (the edges of the original instance), the set of all pairs containing a dummy agent and the set of all pairs containing a dummy house.  An agent $i\in I$ will be indifferent to all dummy houses, strictly prefer being matched to a house in $\Gamma_E(i)$ over being matched to a dummy house and will have the same preferences over $\Gamma_E(i)$ as in the original instance. On the other hand, the dummy agents in $I'$ are indifferent to all houses in $J \cup J'$. Note that the popular dimension of the original and intermediary instances are the same.

 Build an instance of the marriage problem as $M = I \cup I'$, $W = J \cup J'$, edges $E$ and men's preferences as in the intermediary instance, and women are indifferent to all men.
 Since $|I|=|J'|$ and $|I'|=|J|$, by using the pigeonhole principle it is possible to show that any matching in which a man $i \in I$ is unmatched is pareto dominated by one in which $i$ is matched. In particular, since a popular matching cannot be pareto dominated, we may assume all matchings we work with are perfect matchings. Since the condition for popularity requires strict preferences neither the dummy agents nor the women in $W$ affect the popularity of any matching as they are indifferent to all their matches.
Thus, again, only the preferences of the men affect the popularity condition. So a matching is popular in this marriage problem if and only if it is popular among the agents $I$ in the intermediary house allocation instance.

Consequently, the example from Observation~\ref{obs:house-lower-bounds} provides a valid lower bound for the marriage problem either with weighted agents or with ties in the preference orderings. 
\end{proof}

On the positive side, for the marriage problem the popular dimension
is upper bounded by $3$.
\begin{thm}\label{thm:marriage-general-3}
 A popular winning set of size $3$ can be found in polynomial time for the general case of weighted agents with preference ties allowed. In particular, the popular dimension is at most $3$.
\end{thm}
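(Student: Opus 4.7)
The plan is to extend the algorithmic approach of Section~\ref{sec:housing} by running Algorithm~\ref{AI:General} twice on the marriage instance, and then decomposing the combined output via K\"onig's edge-coloring theorem for bipartite multigraphs.

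First, I would apply Algorithm~\ref{AI:General} to the marriage instance viewed as a house allocation problem in which the men are the agents and the women are the houses, ignoring the women's preferences. By Theorem~\ref{thm:house-general-2} (whose proof only uses the preferences of the ``agents''), this yields a $(1,2)$-multigraph $E_M$ in which every man has degree at most $1$ and every woman has degree at most $2$, together with a decomposition into matchings $\{M_1, M_2\}$ satisfying the men-restricted popularity inequality
\[
\sum_{m \,:\, \{M_1,M_2\} \succ_m M'} w(m) \ge \sum_{m \,:\, M' \succ_m \{M_1,M_2\}} w(m)
\]
for every alternate matching $M'$. Symmetrically, a second run with the roles of the two sides swapped produces a $(1,2)$-multigraph $E_W$ in which every woman has degree at most $1$ and every man has degree at most $2$, decomposing into $\{N_1, N_2\}$ with the analogous women-restricted popularity guarantee.

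Next, consider the union $H = E_M \cup E_W$ as a bipartite multigraph (with parallel edges allowed if the same pair appears in both runs). Each man has degree at most $1 + 2 = 3$ in $H$, and each woman at most $2 + 1 = 3$, so $H$ has maximum degree at most $3$. By K\"onig's edge-coloring theorem for bipartite multigraphs, $H$ decomposes in polynomial time into three matchings $\mathcal{M} = \{P_1, P_2, P_3\}$. Because every edge of $E_M$ appears in some $P_i$, each man's best option in $\mathcal{M}$ is at least as good as his best option in $\{M_1, M_2\}$, and similarly each woman's best option in $\mathcal{M}$ is at least as good as hers in $\{N_1, N_2\}$. Combining the two one-sided popularity inequalities from Theorem~\ref{thm:house-general-2} with the monotonicity that expanding a committee can only (weakly) improve each agent's best option then yields $\phi(\mathcal{M}, M') \ge \phi(M', \mathcal{M})$ for every alternate matching $M'$.

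The key conceptual step is the recognition that the asymmetric $(1,2)$-multigraph output of Algorithm~\ref{AI:General} (agent-side degree $\le 1$, house-side degree $\le 2$) combines across two swapped runs into a bipartite multigraph of maximum degree $3$ rather than $4$. The main obstacle is ensuring that K\"onig's decomposition preserves the popularity guarantees from both runs simultaneously; this reduces to the monotonicity observation above, which is routine once the setup is in place. The argument leaves a gap of $1$ between the lower and upper bounds, consistent with the open ``$2, 3$'' entries for the marriage problem in Table~\ref{tab:pop-dim}.
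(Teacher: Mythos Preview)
Your proposal is correct and takes a genuinely different route from the paper. The paper proves Theorem~\ref{thm:marriage-general-3} simply as a corollary of the roommates bound (Theorem~\ref{thm:roommates-general-3}): it runs Algorithm~\ref{AI:General} \emph{once} on the auxiliary instance $(V^+,V^-)$ with $V=M\cup W$, merges the two copies of each vertex, and then $3$-edge-colours the resulting degree-$\le 3$ graph via a pseudoforest argument (each component has at most one cycle because out-degrees are at most~$1$). Your approach instead runs Algorithm~\ref{AI:General} twice, once per side, and superposes the two asymmetric $(1,2)$-multigraphs; bipartiteness is preserved, so K\"onig's theorem replaces the pseudoforest colouring. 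The monotonicity step you isolate---that enlarging the committee can only weakly improve each agent's best option---is precisely what the paper relies on (implicitly) when it transfers popularity from the auxiliary $(1,2)$-matching $E'$ back to the merged edge set $E$. What your approach buys is a cleaner, purely bipartite decomposition and a transparent separation of the men's and women's popularity inequalities; what the paper's approach buys is that a single invocation of Algorithm~\ref{AI:General} suffices and the argument works verbatim for the non-bipartite roommates problem, so one proof covers both theorems.
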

The validity of Theorem~\ref{thm:marriage-general-3} follows from 
a more general result for the roommates problem, namely Theorem~\ref{thm:roommates-general-3}, that we prove in the next section.

\section{The Roommates Problem}\label{sec:roommates}

Lastly, we study the roommates problem.
Unlike the marriage problem, the popular dimension is at least $2$, even for the simple case of unweighted agents with strict preference orderings. 
\begin{obs}\label{obs:roommates-lower-bounds}
The popular dimension of the roommates problem is at least $2$, even for the case of unweighted agents with strict preference orderings. 
\end{obs}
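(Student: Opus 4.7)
The plan is to exhibit a single tiny instance in which no matching is popular; by definition this forces the popular dimension to be at least $2$. The natural candidate is the classical ``roommates triangle'' that Gale and Shapley used to rule out stable matchings: let $V=\{a,b,c\}$, take the complete graph on three vertices, and give the agents cyclically shifted strict preferences
\[
a\colon b \succ c, \qquad b\colon c \succ a, \qquad c\colon a \succ b.
\]
This is the right place to start, because Chung's theorem says any stable matching in a roommates instance is popular, so nonexistence of stability at least leaves open the possibility of nonexistence of popularity.

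In this instance there are only four candidate matchings to consider: the empty matching $\emptyset$ and the three single-edge matchings $M_{ab}$, $M_{bc}$, and $M_{ac}$. First I would dispose of $\emptyset$ by noting that any single-edge matching is strictly preferred by both of its endpoints (under the paper's convention an unmatched agent strictly prefers being matched) while the isolated agent is indifferent, so $\emptyset$ loses $2$-to-$0$ against each of the three non-empty matchings and is therefore not popular.

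The heart of the proof is then three pairwise comparisons among $M_{ab}$, $M_{bc}$, and $M_{ac}$. By the cyclic symmetry of the preferences, each non-empty matching is beaten by exactly one of the other two in a $2$-to-$1$ vote. For example, in $M_{bc}$ versus $M_{ab}$, agent $b$ gets its top choice $c$ and agent $c$ goes from unmatched to matched, so both vote for $M_{bc}$; only $a$ (who loses its partner $b$) objects. The analogous cyclic shifts give $M_{ac}\succ M_{bc}$ and $M_{ab}\succ M_{ac}$. Since every matching is strictly defeated in a pairwise plurality vote, no popular matching exists, so the popular dimension is at least $2$.

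There is essentially no technical obstacle here; the argument is a finite case analysis over four matchings and three agents. The only care required is to apply the paper's vote-counting rule correctly, so that an agent matched in one matching and unmatched in the other contributes its vote to the former, which is precisely what drives the $2$-to-$1$ margins in the cyclic comparisons.
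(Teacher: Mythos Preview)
Your proposal is correct and follows essentially the same approach as the paper: both use the three-agent cyclic instance $a\colon b\succ c$, $b\colon c\succ a$, $c\colon a\succ b$, dispose of the empty matching, and then show each single-edge matching is beaten $2$-to-$1$ by another via the cyclic symmetry. The only minor addition in the paper is a remark that the bound survives if one insists on an even number of agents (by taking two disjoint copies of the triangle), which you might mention for completeness.
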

\begin{proof}
Consider an instance of the roommate problem with agents $A = \set{a,b,c}$. The agents are unweighted and have the following strict preferences: $b \succ_a c, c \succ_b, a,$ and $a \succ_c b$. Take, without loss of generality, the matching to be the edge $(a,b)$. Take any non-empty matching, it consists of a single edge. Without loss of generality, this is the edge $(a,b)$. Then the edge $(b,c)$ is strictly preferred by agents $b$ and $c$. Moreover, clearly any empty matching is beaten by any non-empty matching. Thus, no matching can be popular, and the popular dimension is at least $2$. 

We remark that in the roommates problem it is often assumed that the number of agents is even.
Regardless, our lower bound on the popular dimension remains at least $2$. 
To see this take two copies of the above example where agents in one copy do not wish to be matched to agents
in the other copy. 
\end{proof}
We now show in Section~\ref{sec:roommates-u-t} that the
bound of $2$ is tight for the case of unweighted agents with strict preference orderings.
Then, in Section~\ref{sec:roommates-general}, we prove an upper bound of $3$ on the popular dimension for the general case of weighted agents with preference ties allowed.

\subsection{The Roommates Problem with Unweighted Agents and Strict Preferences}\label{sec:roommates-u-t}

We wish to show that any instance of the roommates problem with unweighted agents and strict preferences admits a popular winning set of cardinality 2.

We will prove this by presenting an algorithm. The way the algorithm works is by starting from any agent $v_1$ and greedily building a sequence $v_1,v_2,\ldots,v_n$ such that $v_{i+1}$ is the top-choice of $v_i$ among $V_i=\{v_{i+1},\ldots,v_n\}$ if such a top-choice exists. If no such top choice exists, instead we pick an agent that has not been selected yet.

\begin{algorithm}
    Let $M_1 = M_2 = \emptyset$. Let $v_1$ be any vertex of $V$ and set $t=1$.

    Let $V_t$ be the set of vertices in $V\setminus \set{v_1}$ that are unmatched at time $t$, hence $V_1 = V \setminus\{v_1\}$.
    
    \While{$V_t \ne \emptyset$}{
        \uIf{$v_t$ has a neighbour in $V_t$}{
            Pick $v_{t+1}$ to be the favourite agent of $v_t$ in $V_t$.\\
            \uIf{$t$ is odd}{
                Add $(v_t,v_{t+1})$ to $M_1$.
            }
            \uElse{
                Add $(v_t,v_{t+1})$ to $M_2$.
            }
        }
        \Else{
            Pick $v_{t+1}$ to be any agent in $V_t$.
        }
        Set $V_{t+1} = V_t\setminus\{v_{t+1}\}$.\\
        Increment $t$ by $1$.
    }
    Output $\mathcal{M} = \set{M_1,M_2}$.
\caption{Size 2 Popular Winning Set in the unweighted roommates problem with strict preferences} \label{AA:main}
\end{algorithm}

\begin{thm}\label{thm:roommates-basic-2}
    A popular winning set of size $2$ can be found in polynomial time for the case of unweighted agents with strict preferences. In particular, the popular dimension is $2$.
\end{thm}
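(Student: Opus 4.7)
The plan is to prove that $\mathcal{M} = \{M_1, M_2\}$ output by Algorithm~\ref{AA:main} is a popular winning set against every alternate matching $M'$ by constructing an injection from $A := \{v : M' \succ_v \mathcal{M}\}$ into $B := \{v : \mathcal{M} \succ_v M'\}$. Since agents are unweighted, this yields $\phi(M', \mathcal{M}) \le \phi(\mathcal{M}, M')$ as required. The polynomial running time is manifest from the single \texttt{while} loop.

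The key structural invariant to exploit is that whenever the algorithm appends $(v_t, v_{t+1})$ to $M_1$ or $M_2$, the vertex $v_{t+1}$ is $v_t$'s most preferred neighbour in $V_t = \{v_{t+1}, \ldots, v_n\}$. Consequently the edges of $\mathcal{M}$ incident to $v_t$ form a subset of $\{(v_{t-1}, v_t), (v_t, v_{t+1})\}$, and $\mathcal{M}(v_t)$ (if it exists) is $v_t$'s favourite among the endpoints of those edges actually added. From this invariant I would deduce a \emph{lookback property}: if $v_t \in A$, then $M'(v_t) = v_s$ for some $s < t$. Indeed, if $(v_t, v_{t+1})$ was added, then $\mathcal{M}(v_t)$ is weakly better for $v_t$ than $v_{t+1}$; chaining with $M'(v_t) \succ_{v_t} \mathcal{M}(v_t)$ yields $M'(v_t) \succ_{v_t} v_{t+1}$, and since $v_{t+1}$ strictly beats every other vertex of $V_t$, this forces $M'(v_t) \notin V_t$. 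If $(v_t, v_{t+1})$ was not added, then $v_t$ has no neighbour in $V_t$, so the same conclusion follows directly.

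Now define $\pi : A \to B$ by $\pi(v_t) := M'(v_t)$; injectivity is immediate since $M'$ is a matching. To verify $\pi(v_t) \in B$, set $v_s := M'(v_t)$ with $s < t$. Because $v_t \in V_s$ is a neighbour of $v_s$, the algorithm must add the edge $(v_s, v_{s+1})$ with $v_{s+1}$ equal to $v_s$'s top choice in $V_s$. The sub-case $t = s+1$ is impossible: it would place $(v_s, v_t) \in \mathcal{M}$, so $\mathcal{M}(v_t)$ would be at least as good as $v_s = M'(v_t)$ for $v_t$, contradicting $v_t \in A$. Hence $t > s+1$ and, by strictness, $v_s$ strictly prefers $v_{s+1}$ over $v_t$. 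Since $\mathcal{M}(v_s)$ is at least as good as $v_{s+1}$ for $v_s$, it strictly beats $M'(v_s) = v_t$, which places $v_s$ in $B$ as desired.

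The main obstacle I anticipate is the boundary case analysis around the ends of the sequence and around agents unmatched in $\mathcal{M}$ or $M'$ (for instance, where $(v_{t-1}, v_t)$ or $(v_t, v_{t+1})$ is missing from $\mathcal{M}$). These should dissolve once one observes that any $v_t \in A$ must be $M'$-matched to some earlier $v_s$, which in turn forces the algorithm at step $s$ to have selected $v_{s+1}$ as $v_s$'s top choice in $V_s$, giving enough preference information to run the argument above. Combined with Observation~\ref{obs:roommates-lower-bounds}, this establishes that the popular dimension is exactly $2$.
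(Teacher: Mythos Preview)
Your proposal is correct and follows essentially the same approach as the paper: both arguments show that if $v_t$ strictly prefers $M'$ over $\mathcal{M}$ then $M'(v_t)=v_s$ with $s<t$ and $(v_s,v_{s+1})\in\mathcal{M}$, whence $v_s$ strictly prefers $\mathcal{M}$ over $M'$, giving an injection from dissatisfied agents to satisfied ones via $M'$. Your treatment is in fact slightly more careful than the paper's in explicitly dispatching the boundary cases (the sub-case $t=s+1$ and the case where $(v_t,v_{t+1})$ was not added).
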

\begin{proof}
It is easy to see that Algorithm~$\ref{AA:main}$ terminates in polynomial time.
Furthermore, by construction $M_1$ and $M_2$ are matchings.
Indeed, together $M_1$ and $M_2$ form alternating edges on disjoint paths.
Thus, it only remains to prove that $\mathcal{M} = \{M_1,M_2\}$ is a popular winning set.
Take any matching $M'$. 
Let $v_i$ be an agent that strictly prefers its neighbour $v_j$
in $M'$ over $v_{i-1}$ and $v_{i+1}$ (if they exist).
Note that $v_j\notin V_i = \{v_{i+1},\ldots,v_n\}$, otherwise $v_i$ could not strictly prefer $v_j$ over $v_{i+1}$ by observing line 5.
It follows that $j\le i-2$, but then $i\ge j+2$, so $v_i\in S_{j+1}=\{v_{j+2},\dots, v_n\}$.
Line 5 gives us that $v_j$ strictly prefers $v_{j+1}$ over $v_i$ since $v_i$ and $v_j$ are neighbours.

Consequently, if an agent $u$ strictly prefers $M'$ over
$\mathcal{M}$, then its neighbour in $M'$ strictly prefers $\mathcal{M}$ over $M'$.
But $M'$ is a matching, so the number of agents who are strictly happier with $M'$ is at most the number of agents who are strictly happier with $\mathcal{M}$.
We conclude that $\mathcal{M}$ is at least as popular as
$M'$. Thus we have a popular winning set of cardinality two,
as desired.
\end{proof}

\subsection{The Roommates Problem with Weighted Agents and Ties Allowed}\label{sec:roommates-general}


To prove our bound of 3 on the general case of the roommates problem, we present Algorithm~\ref{AA:w+t} that outputs three matchings $M_1, M_2$ and $M_3$ that together form a popular winning set.

To find a popular winning set of cardinality $3$, the algorithm converts the roommates instance into an instance of the house allocation problem on an auxiliary graph. Running Algorithm~\ref{AI:General} outputs two matchings on the auxiliary graph.
In turn, after transforming back to the original roommates instance, these two matchings induce the three desired matchings. Let's now
formalize this algorithm.

\begin{algorithm}
    Let $V$ be the set of agents. \\

    Set $V^-$ be copies of $V$ treated as houses with no preferences.\\
    Let $V^+$ be copies of $V$ with preferences over $V^-$ matching those of $V$.\\
    Run Algorithm~\ref{AI:General} over $(V^+,V^-)$ to get a set of edges $E'=M'_1\cup M'_2$.\\
    Merge the vertices of $V^+$ with their copy in $V^-$. Denote $E$ the resulting set of edges.\\
    3-colour the resulting set of edges $E$ to obtain 3 matchings $M_1,M_2,M_3$.\\
    Output $\mathcal{M}=\{M_1,M_2,M_3\}$.
\caption{Size 3 Popular Winning Set in the general roommates problem}\label{AA:w+t}
\end{algorithm}

\begin{thm}\label{thm:roommates-general-3}
 A popular winning set of size $3$ can be found in polynomial time for the general case of weighted agents with preference ties allowed. In particular, the popular dimension is at most $3$.
\end{thm}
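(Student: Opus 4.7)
The plan is to analyze Algorithm~\ref{AA:w+t} and verify that its output $\mathcal{M} = \{M_1, M_2, M_3\}$ is a popular winning set for the original roommates instance. The workhorse is Theorem~\ref{thm:house-general-2} applied to the auxiliary weighted house allocation instance built in lines 2--4: each agent $v^+ \in V^+$ inherits the weight $w(v)$, is adjacent to $u^-$ iff $u \in \Gamma(v)$ in the roommates graph, and has the same preferences as $v$ over its neighbors. Hence the output $\mathcal{M}' = \{M_1', M_2'\}$ of line 4 is a popular winning set in this house allocation instance.

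Before the popularity analysis, I must certify that line 6 is well-defined, i.e.\ that the merged multigraph $E$ is $3$-edge-colorable in polynomial time. Since $E'$ is a $(1,2)$-matching, each $v^+$ has degree at most $1$ and each $v^-$ has degree at most $2$ in $E'$, so every vertex of $E$ has degree at most $3$. Orienting each edge of $E$ from its $V^+$-end to its $V^-$-end yields a directed multigraph with out-degree $\le 1$ at every vertex, hence each weakly connected component contains at most one directed cycle (possibly a digon) with rooted trees hanging off its vertices, or is simply a tree with no cycle. One colors the cycle with three colors (odd cycles use all three, even cycles only two), then greedily extends into the trees: each tree vertex has at most two in-edges from children, which can always avoid the color of its parent edge.

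Next, fix an arbitrary alternative matching $M'$ on $V$ and lift it to $M'' = \{(v^+, u^-), (u^+, v^-) : (v, u) \in M'\}$, a matching in the auxiliary bipartite graph. The key observation is that $\mathcal{M}'(v^+)$, whenever defined, equals a neighbor of $v$ in $E$, which is therefore $v$'s match in some $M_i$; consequently $\mathcal{M}(v) \succeq_v \mathcal{M}'(v^+)$. Two translation claims follow:
\textbf{(A)} if $v^+$ strictly prefers $\mathcal{M}'(v^+)$ to $M''(v^+)$, then $v$ strictly prefers $\mathcal{M}(v)$ to $M'(v)$; \textbf{(B)} if $v$ strictly prefers $M'(v)$ to $\mathcal{M}(v)$, then $v^+$ strictly prefers $M''(v^+)$ to $\mathcal{M}'(v^+)$. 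Since $w(v^+) = w(v)$, these translate to $\phi_{\mathrm{HA}}(\mathcal{M}', M'') \le \phi(\mathcal{M}, M')$ and $\phi(M', \mathcal{M}) \le \phi_{\mathrm{HA}}(M'', \mathcal{M}')$. Chaining with the popularity inequality $\phi_{\mathrm{HA}}(\mathcal{M}', M'') \ge \phi_{\mathrm{HA}}(M'', \mathcal{M}')$ yields $\phi(\mathcal{M}, M') \ge \phi(M', \mathcal{M})$, which is exactly what is needed.

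The main technical obstacles are twofold. First, careful bookkeeping around unmatched agents is required: $v$ may be matched in $\mathcal{M}$ only through its house role $v^-$ (so $v^+$ is unmatched in $\mathcal{M}'$), yet Claims (A) and (B) must still deliver the right inequality; treating ``unmatched'' uniformly as the worst option and observing that $v$ matched in $M'$ forces $v^+$ matched in $M''$ handles all subcases. Second, the $3$-edge-colorability of $E$ is not automatic for an arbitrary multigraph of maximum degree $3$, but the out-degree-$\le 1$ structure inherited from the $(1,2)$-matching keeps each component simple enough for the greedy coloring above to succeed.
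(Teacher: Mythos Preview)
Your proposal is correct and follows essentially the same approach as the paper: reduce to the auxiliary house allocation instance on $(V^+,V^-)$, invoke Theorem~\ref{thm:house-general-2}, argue that the merged edge set $E$ is $3$-edge-colourable via the out-degree-$\le 1$ orientation (so each component is a tree plus at most one cycle), and then transfer popularity back to the roommates instance. Your explicit translation via the lifted matching $M''$ and Claims~(A)/(B), together with the inequality $\mathcal{M}(v)\succeq_v \mathcal{M}'(v^+)$, spells out in detail what the paper compresses into the sentence ``since the set of agents is identical, $\mathcal{M}$ will be at least as popular as any other matching''; this extra care (including the digon and unmatched-agent subcases) is welcome but does not constitute a different route.
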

\begin{proof}

For every line except line 6, it is straightforward to see that they run in polynomial time.
So it remains to prove that line 6 can be run in polynomial time and that the output is a
popular winning set. Recall that Algorithm~\ref{AI:General} actually outputs a popular winning set of size two $\{M'_1,M'_2\}$ where $E' = M'_1\cup M'_2$ is a $(1,2)$-matching $E'$.
By Theorem~\ref{thm:house-general-2}, the pair of matchings $M'_1$ and $M'_2$ form a popular winning set in the auxiliary house allocation problem
on $(V^+,V^-)$. That is, there is no matching $M'$ on $(V^+,V^-)$ that is more popular to the agents of $V^+$ over $E'$. 

Observe that any matching over $V$ naturally corresponds to a matching over $(V^+,V^-)$. Moreover, $E$ and $E'$ correspond to the same set of edges. So assume $E$ can be partitioned into $k$ matchings, $\mathcal{M}=\{M_1, M_2,\dots, M_k\}$.
Then, since the set of agents is identical, $\mathcal{M}$ will be at least as popular as any other matching in the original roommates problem.

Thus it suffices to prove that $E$ is 3-edge colourable in polynomial time
and so decomposes in $k=3$ matchings, that is, $\mathcal{M}=\{M_1, M_2, M_3\}$.
Because Algorithm~\ref{AI:General} outputs a $(1,2)$-matching $E'$, each agent has degree at most one and each house has degree at most two, with respect to the edges in $E'$.
It follows that, with respect to the edges in $E$, each vertex (agent)
has only one edge incident to it that came from its copy in $V^+$ and at most two edges incident to it that came from its copy in $V^-$. 

In particular, let's orient the edges of $E$ as follows.
If $(i, j)\in E'$ with $i\in V^+$ and $j\in V^-$, then we orient $(i,j)$
from $i$ to $j$ in $E$. It follows that, with respect to the edges in $E$, each vertex (agent) in $V$ has out-degree at most $1$ and in-degree at most $2$. Thus $E$ induces a directed graph $D=(V, E)$ with maximum out-degree equal to $1$. This implies that each connected component of $D$ has at most one cycle. 
Therefore, removing the arc orientations, we have that each component of $E$ is a (subgraph of) a tree plus one edge with maximum degree at most $3$.
We then have a polynomial time greedy algorithm to $3$-colour the edges of $E$. To begin we $3$-edge colour the edges of the unique cycle $C$ of each component (if it exists); note that $C$ may be an odd cycle.
Contracting the edges of $C$ gives a tree $T$, which we root at $C$.
Then we colour the remaining edges of $E$ in increasing order of distance
from the root $C$. Suppose we attempt to colour an edge $e=(i,j)\in E$,
where $j$ is closer to the root than $i$. Then,
because the maximum degree is $3$ (and there are no cycles in $T$), 
at most two edges incident to $j$ have been coloured (and none incident to $i$). Thus at least one of the three colours is available to colour $e$. This process gives a 3-edge colouring of $E$ which gives our
three matchings $\mathcal{M}=\{M_1,M_2,M_3\}$.
\end{proof}



\section{Conclusion}
    Motivated by the nonexistence of popular matchings, we introduced the notion of popular dimension. For all cases of the house allocation problem we showed the popular dimension to be exactly $2$. We additionally proved that for the roommates problem with unweighted agents and strict preferences the popular dimension is $2$. However, for the marriage and roommates problems, adding weights to the agents or ties in the preference lists leaves a gap as we have a lower bound of 2 on the popular dimension and an upper bound of 3. 
    It is an open problem to close the gap between these lower and upper bounds. More specifically, for the marriage problem, we conjecture that the popular dimension is exactly $2$, even with weighted agents and ties in the preference lists.     

\section{Acknowledgements}
This research was supported by NSERC Discovery Grant 2022-04191. The fourth author was also supported by a McCall MacBain Scholarship and partially by FRQNT Grant 332481.
The last author is grateful to Kavitha Telikepalli for introducing him to the topic of popular matchings.


\bibliography{references}

\begin{thebibliography}{10}

\bibitem{AIK07}
D.~Abraham, R.~Irving, T.~Kavitha, and K.~Mehlhorn.
\newblock Popular matchings.
\newblock {\em SIAM Journal on Computing}, 37(4):1030--1046, 2007.

\bibitem{BIM10}
P.~Bir\'o, R.~Irving, and D.~Manlove.
\newblock Popular matchings in the marriage and roommates problems.
\newblock In {\em Proceedings of the 7th International Conference on Algorithms
  and Complexity (CIAC)}, pages 97--108, 2010.

\bibitem{CLR25}
M.~Charikar, A.~Lassota, P.~Ramakrishnan, A.~Vetta, and K.~Wang.
\newblock Six candidates suffice to win a voter majority.
\newblock In {\em Proceedings of the 57th Symposium on Theory of Computing
  (STOC)}, 2025.

\bibitem{Chu00}
K.~Chung.
\newblock On the existence of stable roommate matchings.
\newblock {\em Games and Economic Behavior}, 33(2):206--230, 2000.

\bibitem{Csa24}
G.~Cs\'aji.
\newblock Popularity and perfectness in one-sided matching markets with
  capacities.
\newblock {\em arXiv}, abs/2403.00598, 2024.

\bibitem{CKT24}
G.~Cs\'aji, T.~Király, K.~Takazawa, and Y.~Yokoi.
\newblock Popular maximum-utility matchings with matroid constraints.
\newblock arXiv, abs/2407.09798, 2024.

\bibitem{Cseh17}
A.~Cseh.
\newblock Popular matchings.
\newblock In U.~Endriss, editor, {\em Trends in Computational Social Choice},
  chapter~6, pages 105--122. AI Access, Amsterdam, 2017.

\bibitem{EIV23}
F.~Echenique, N.~Immorlica, and V.~Vazirani, editors.
\newblock {\em Online and Matching-Based Market Design}.
\newblock Cambridge University Press, 2023.

\bibitem{ELS15}
E.~Elkind, J.~Lang, and A.~Saffidine.
\newblock Condorcet winning sets.
\newblock {\em Social Choice and Welfare}, 44(3):493--517, 2015.

\bibitem{Fis84}
P.~Fishburn.
\newblock Probabilistic social choice based on simple voting comparisons.
\newblock {\em The Review of Economic Studies}, 51(4):683--692, 1984.

\bibitem{GS62}
D.~Gale and L.~Shapley.
\newblock College admissions and the stability of marriage.
\newblock {\em The American Mathematical Monthly}, 69(1):9--14, 1962.

\bibitem{Gar75}
P.~G\"ardenfors.
\newblock Match making: Assignments based on bilateral preferences.
\newblock {\em Behavioral Science}, 20:166--173, 1975.

\bibitem{Hae18}
G.~Haeringer, editor.
\newblock {\em Market Design: Auctions and Matching}.
\newblock MIT Press, 2018.

\bibitem{HC24}
K.~Heeger and A.~Cseh.
\newblock Popular matchings with weighted voters.
\newblock {\em Games and Economic Behavior}, 144:300--328, 2024.

\bibitem{HK13}
C.~Huang and T.~Kavitha.
\newblock Popular matchings in the stable marriage problem.
\newblock {\em Inf. Comput.}, 222:180--194, 2013.

\bibitem{HK17}
C.~Huang and T.~Kavitha.
\newblock Popularity, mixed matchings, and self-duality.
\newblock In {\em Proceedings of the 28th Symposium on Discrete Algorithms
  (SODA)}, pages 2294--2310, 2017.

\bibitem{Irv85}
R.~Irving.
\newblock An efficient algorithm for the ``stable roommates'' problem.
\newblock {\em Journal of Algorithms}, 6(4):577--595, 1985.

\bibitem{IW10}
T.~Itoh and O.~Watanabe.
\newblock Weighted random popular matchings.
\newblock {\em Random Structures and Algorithms}, 37(4):477--494, 2010.

\bibitem{Kam16}
N.~Kamiyama.
\newblock The popular matching and condensation problems under matroid
  constraints.
\newblock {\em Journal of Combinatorial Optimization}, 32(4):1305--1326, 2016.

\bibitem{Kam17}
N.~Kamiyama.
\newblock Popular matching with ties and matroid constraints.
\newblock {\em SIAM Journal on Discrete Mathematics}, 31(3):1801--1819, 2017.

\bibitem{Kam20}
N.~Kamiyama.
\newblock Popular matching with two-sided preference lists and matroid
  constraints.
\newblock {\em Theoretical Computer Science}, 809:265--276, 2020.

\bibitem{Kav16}
T.~Kavitha.
\newblock Popular half-integral matchings.
\newblock In {\em Proceedings of 43rd International Colloquium on Automata,
  Languages and Programming (ICALP)}, pages 1--22, 2016.

\bibitem{KKM22a}
T.~Kavitha, T.~Kir\'aly, J.~Matuschke, I.~Schlotter, and U.~Schmidt-Kraepelin.
\newblock The popular assignment problem: when cardinality is more important
  than popularity.
\newblock In {\em Proceedings of the 2022 Annual ACM-SIAM Symposium on Discrete
  Algorithms (SODA)}, pages 103--123, 2022.

\bibitem{KMN11}
T.~Kavitha, J.~Mestre, and M.~Nasre.
\newblock Popular mixed matchings.
\newblock {\em Theoretical Computer Science}, 412(24):2679--2690, 2011.

\bibitem{Kre65}
G.~Kreweras.
\newblock Aggregation of preference orderings.
\newblock In {\em Mathematics and Social Sciences I: Proceedings of the
  seminars of Menthon-Saint-Bernard, France (1--27 July 1960) and of
  G{\"o}sing, Austria (3--27 July 1962)}, pages 73--79, 1965.

\bibitem{LVV24}
A.~Lassota, A.~Vetta, and B.~Von~Stengel.
\newblock The {C}ondorcet dimension of metric spaces.
\newblock arXiv, abs/2410.09201, 2024.

\bibitem{Mah06}
M.~Mahdian.
\newblock Random popular matchings.
\newblock In {\em Proceedings of the 7th Conference on Economics and
  Computation (EC)}, pages 238--242, 2006.

\bibitem{MS06}
D.~Manlove and C.~Sng.
\newblock Popular matchings in the capacitated house allocation problem.
\newblock In {\em Proceedings of the 14th European Symposium on Algorithms
  (ESA)}, pages 492--503, 2006.

\bibitem{McC08}
R.~McCutchen.
\newblock The least-unpopularity-factor and least-unpopularity-margin criteria
  for matching problems with one-sided preferences.
\newblock In {\em Proceedings of the 15th Latin American Symposium on
  Theoretical Informatics (LATIN)}, pages 593--604, 2009.

\bibitem{Mes14}
J.~Mestre.
\newblock Weighted popular matchings.
\newblock {\em ACM Transactions on Algorithms}, 10(1):1--16, 2014.

\bibitem{Pal14}
K.~Paluch.
\newblock Popular and clan-popular $b$-matchings.
\newblock {\em Theoretical Computer Science}, 544:3--13, 2014.

\bibitem{RI19}
S.~Ruangwises and T.~Itoh.
\newblock Random popular matchings with incomplete preference lists.
\newblock {\em Journal of Graph Algorithms and Applications}, 23(5):815--835,
  2019.

\bibitem{SM10}
C.~Sng and D.~Manlove.
\newblock Popular matchings in the weighted capacitated house allocation
  problem.
\newblock {\em Journal of Discrete Algorithms}, 8(2):102--116, 2010.

\end{thebibliography}

\end{document}